\newtheorem{definition}{Definition}
\newtheorem{theorem}{Theorem}
\newtheorem{proof}{Proof}
\begin{document}

\title{Graph Spring Network and Informative Anchor Selection for Session-based Recommendation}

\author{Zizhuo~Zhang,
        and Bang~Wang
\IEEEcompsocitemizethanks{
\IEEEcompsocthanksitem Z. Zhang and B. Wang are with the School of Electronic, Information and Communications, Huazhong University of Science and Technology (HUST), Wuhan, China 430074 \protect\\
E-mail: \{zhangzizhuo, wangbang\}@hust.edu.cn
\IEEEcompsocthanksitem Corresponding Author: Bang Wang.
\IEEEcompsocthanksitem  This work is supported in part by National Natural Science Foundation of China (Grant No: 62172167).}%
\thanks{Manuscript received \today; revised XXX.}}

%

\IEEEtitleabstractindextext{%
\begin{abstract}
\justifying
Session-based recommendation (SBR) aims at predicting the next item for an ongoing anonymous session. The major challenge of SBR is how to capture richer relations in between items and learn ID-based item embeddings to capture such relations. Recent studies propose to first construct an item graph from sessions and employ a Graph Neural Network (GNN) to encode item embedding from the graph. Although such graph-based approaches have achieved performance improvements, their GNNs are not suitable for ID-based embedding learning for the SBR task. In this paper, we argue that the objective of such ID-based embedding learning is to capture a kind of \textit{neighborhood affinity} in that the embedding of a node is similar to that of its neighbors' in the embedding space. We propose a new graph neural network, called Graph Spring Network (GSN), for learning ID-based item embedding on an item graph to optimize neighborhood affinity in the embedding space. Furthermore, we argue that even stacking multiple GNN layers may not be enough to encode potential relations for two item nodes far-apart in a graph. In this paper, we propose a strategy that first selects some informative item anchors and then encode items' potential relations to such anchors. In summary, we propose a GSN-IAS model (\underline{G}raph \underline{S}pring \underline{N}etwork and \underline{I}nformative \underline{A}nchor \underline{S}election) for the SBR task. We first construct an item graph to describe items' co-occurrences in all sessions. We design the GSN for ID-based item embedding learning and propose an \textit{item entropy} measure to select informative anchors. We then design an unsupervised learning mechanism to encode items' relations to anchors. We next employ a shared gated recurrent unit (GRU) network to learn two session representations and make two next item predictions. Finally, we design an adaptive decision fusion strategy to fuse two predictions to make the final recommendation. Extensive experiments on three public datasets demonstrate the superiority of our GSN-IAS model over the state-of-the-art models.
\end{abstract}

\begin{IEEEkeywords}
Graph spring network, Informative anchor selection, Item entropy, Session-based recommendation, Graph neural network
\end{IEEEkeywords}}

\maketitle

\IEEEdisplaynontitleabstractindextext

%
\IEEEpeerreviewmaketitle

\IEEEraisesectionheading{\section{Introduction}\label{sec:introduction}}

\IEEEPARstart{R}{ecommendation} systems as an important information filtering device can effectively reduce information overload and assist users easily finding their mostly interested items. A common approach~\cite{su:et.al:2009:AI:survey,shi:et.al:2014:CSUR,zhang:et.al:2019:CSUR} to provide personalized recommendation is through analyzing user profile and his behavior to match his mostly interested items. However, user profile may not be available in many practical situations, for example,  a user browsing items without logging. The task of session-based recommendation (SBR) is hence proposed for such situations, which has attracted widespread attentions recently~\cite{wang:et.al:2019:arXiv:survey}. According to~\cite{li:et.al:2017:CIKM,liu:et.al:2018:KDD,wu:et.al:2019:AAAI}, the SBR task can be defined as follows:

\par
Let $V=\{v_i|i=1,2,...,N\}$ denote the set of all unique items, where $N$ is the total number of items. A session is a chronological sequence of clicked items for an anonymous user, which can be represented as $S=\{v_1, v_2, ..., v_\tau\}$, where $v_i \in V$ in the session is the $i$-th item clicked by the user and $\tau$ is the session length. The SBR task is to predict the next item to be clicked for an ongoing session, i.e. $v_{\tau+1}$ for session $S$. A SBR recommendation model predicts the preference scores for all candidate items in $V$, i.e. $\hat{\mathbf{y}}=\{\hat{y}_1,\hat{y}_2,...,\hat{y}_N\}$, and constructs a top-$K$ recommendation list by selecting $K$ items with highest preference scores.

\par
The challenges of SBR lies in two aspects: (1) Sessions are anonymous and items are recorded with only their identifiers (ID), which makes it impossible to establish users' profiles nor to analyze items' features; (2) Sessions are often short, which makes it difficult to capture transitional logics in sessions, not even mention that items often have no attribution information. Early approaches have mainly relied on some simple popularity rules~\cite{sarwar:et.al:2001:WWW} or matrix factorization~\cite{rendle:et.al:2009:UAI} for predicting the next item, which have not well encoded items' representations, nor utilizing items' sequential relations in sessions~\cite{wu:et.al:2019:AAAI}. Recently, for their powerful capabilities of representation learning, many neural networks have been designed and applied for the SBR task.

\par
Some neural network-based methods~\cite{li:et.al:2017:CIKM,liu:et.al:2018:KDD,wang:et.al:2019:SIGIR,luo:et.al:2020:IJCAI,hidasi:et.al:2016:ICLR,ren:et.al:2019:AAAI} have achieved significant improvements over traditional approaches. The core idea is to design a neural network model for encoding items' representations and learning the ongoing session representation, such that the next item is selected with its representation the most similar to that the session representation, e.g., using a cosine or dot product similarity function. Some of neural models~\cite{wang:et.al:2019:SIGIR,luo:et.al:2020:IJCAI} have further considered the collaborative information between multiple sessions to enhance the session representation learning. However, they lack the exploration of potential relations in between items from multiple sessions.

\par
Some graph-based models propose to first construct an item graph for item embedding learning and session representation learning~\cite{wu:et.al:2019:AAAI,xu:et.al:2019:IJCAI,xia:et.al:2021:AAAI,qiu:et.al:2019:CIKM,chen:et.al:2020:KDD,wang:et.al:2020:SIGIR,zheng:et.al:2020:ICDMW,yu:et.al:2020:SIGIR,pan:et.al:2020:CIKM}. As many items' transitions are extracted from multiple sessions, it is expected that more useful transitional information can be encoded in items' embeddings via, say for example, a \textit{graph neural network} (GNN). We approve the effectiveness of item graph construction, as it has two advantages: One is to alleviate the issue of item sparsity in individual sessions; The other is to transfer the problem of mining item inter-relations as exploring structural characteristics of a graph. Some well-known GNNs, including \textit{graph convolution network} (GCN)~\cite{kipf:et.al:2017:ICLR}, \textit{graph attention network} (GAT)~\cite{velivckovic:et.al:2018:ICLR}, LightGCN~\cite{he:et.al:2020:SIGIR}, have been adopted for learning item embedding mainly from the viewpoint of item co-occurrences in the graph by capturing node-centric local structural characteristics. Although these GNNs can encode graph structural characteristics, they may not be much suitable for item embedding learning in the SBR task.

\par
The items contained in anonymous sessions are only equipped with their identifiers (ID). On the one hand, items' embeddings are often randomly initialized. On the other hand, the objective of  such ID-based embedding learning is to ensure a kind of \textit{neighborhood affinity} by capturing items' co-occurrences as their topological relations on the item graph. We refer the neighborhood affinity to indicate the condition that a node embedding is closer to its neighbors' than its non-neighbors' in the embedding space. Some GNNs, like the GCN and GAT, employ a kind of global transformation kernel to update items' initial embeddings. Although they can capture local structural characteristics, such global transformation may not ensure the neighborhood affinity in the original item embedding space, not to mention introducing additional trainable parameters for embedding space transformation. The LightGCN, though without a global transformation, the aggregation computation simply updates a node embedding via averaging its neighbors', yet not including the node itself embedding in the update process.

\par
In this paper, we propose a new \textit{graph spring network} (GSN) for learning item embedding on an item graph. The basic idea of GSN is to ensure the neighborhood affinity in the embedding space by aggregating the neighbors' information to a node through a series of current optimal aggregation weights in an iterative way. Compared with the GCN and GAT, it does not employ some global transformation kernel, so without involving additional trainable parameters. Compared with the LightGCN, its aggregation weights computation directly depends on item embeddings, rather than simple graph Laplacian matrix.

\par
We can stack multiple GSN layers to include high-order neighbors for learning item embedding, however it still may not be enough to encode possible relations for two item nodes far-apart in the item graph. Indeed, the item graph only establishes edges of co-appeared items in sessions. It could be the case that some items have not co-appeared in sessions, but they may share some similar latent features, say for example categorical features. This resembles the case of two nodes far-apart in the item graph but containing similar latent features, which motivates us to further encode some global topological signals for each item. In this paper, we propose to first select some informative items (called anchors) and then encode potential relations to these anchors.

\par
In this paper, we proposed a GSN-IAS model (\underline{G}raph \underline{S}pring \underline{N}etwork and \underline{I}nformative \underline{A}nchor \underline{S}election) for the SBR task. An item graph is first constructed for describing items' co-occurrences in all sessions. We design the GSN for item embedding learning based on the item graph, and analyze its convergence property. We propose a measure, called \textit{item entropy}, to select informative anchors only based on their statistics in sessions. We next design an unsupervised learning mechanism to output items' encodings, which is to encode items' potential relations to anchors. Based on item embedding and item encoding, we employ a shared \textit{gated recurrent unit} (GRU) network to learn two session representations and output two next item predictions. Finally, we design an adaptive decision fusion strategy to fuse the two predictions to output final preference scores for recommendation list construction. Extensive experiments on three public real-world datasets validate that our GSN-IAS outperforms the state-of-the-art methods .
\par
Our main contributions can be summarized as follows:
\begin{itemize}
	\item Propose a new graph neural network (GSN) for item embedding learning.
	\item Provide a proof on the GSN convergence.
	\item Define item entropy for informative anchor selection.
	\item Propose an unsupervised method for anchor-based item encoding learning.
	\item Propose a parameters shared GRU for two session representations and two next item predictions.
    \item Propose an adaptive decision fusion to fuse two predictions for final recommendation.
\end{itemize}
\par
The rest is organized as follows: Section~\ref{Sec:Retaled Work} reviews the related work. Section~\ref{Sec:Method} presents our GSN-IAS model. Experiment settings are provided in Section~\ref{Sec:Experiment Settings} and results are analyzed in Section~\ref{Sec:Experiment Results}. Section~\ref{Sec:Conlusion} concludes this paper.

\begin{figure*}[t]
	\centering
	\includegraphics[width=\textwidth]{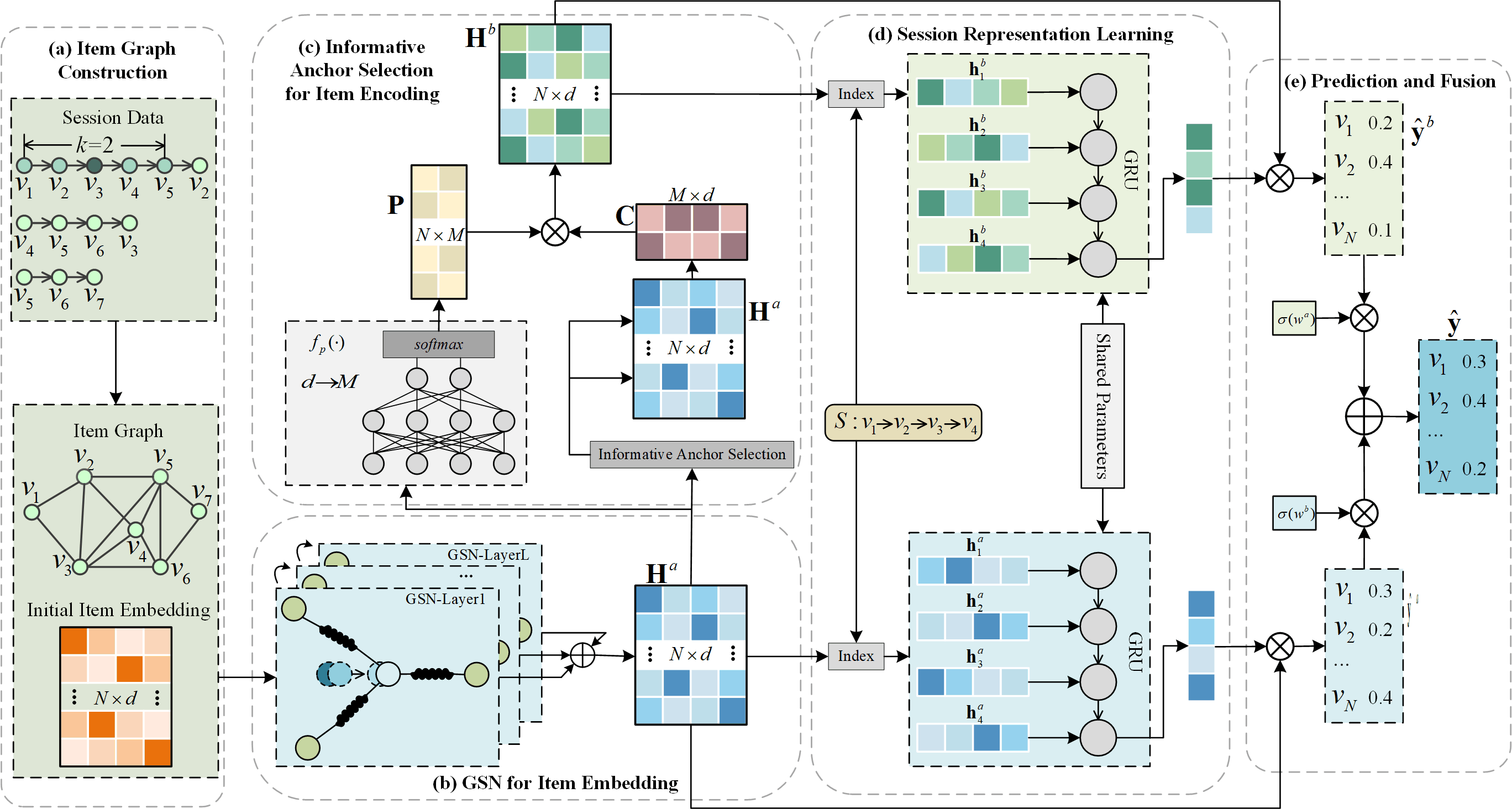}
	\caption{The overall architecture of the proposed GSN-IAS model. (a) Item graph construction; (b) Graph spring network for item embedding; (c) Informative item selection for item encoding; (d) A shared GRU network for session representation learning; (e) Adaptive fusion of two predictions.}
	\label{Fig:Model}
\end{figure*}

\section{Related Work}
\label{Sec:Retaled Work}
We review related work from three aspects: traditional methods, neural network methods and graph-based methods.

\subsection{Traditional Methods}
Some general recommendation approaches~\cite{sarwar:et.al:2001:WWW,  rendle:et.al:2009:UAI, linden:et.al:2003:IC} can be directly applied in the SBR task. For example, the item-based method~\cite{sarwar:et.al:2001:WWW} recommends items by computing items' similarity based on the co-occurrence relations. However, these methods neglect the sequential information in a session, i.e. the order of items. Some other studies~\cite{shani:et.al:2005:JML, rendle:et.al:2010:WWW, zimdars:et.al:2001:UAI} employ Markov chain to model sequential information of a session. For example, Rendle et al.~\cite{rendle:et.al:2010:WWW} combine matrix factorization and first-order Markov chain for next item prediction. However, the independence assumption of Markov chain-based methods is so strict, limiting its prediction accuracy~\cite{wu:et.al:2019:AAAI}. Recently, some studies~\cite{jannach:et.al:2017:Recsys,garg:et.al:2019:SIGIR} realize the importance of items' co-occurrence relations and temporal properties in sessions, and they can achieve competitive performance by using the k-nearest-neighbor approach and temporal decay functions.

\subsection{Neural Network Methods}
Recurrent neural networks (RNNs) are capable of sequential modeling, and many RNN-based models have been designed for the SBR task~\cite{li:et.al:2017:CIKM, wang:et.al:2019:SIGIR, ren:et.al:2019:AAAI,  tan:et.al:2016:DLRS, hidasi:et.al:2016:ICLR, quadrana:et.al:2017:Recsys, pan:et.al:2020:SIGIR:An, zhang:et.al:2020:Neurocomputing}. For example, Hidasi et al.~\cite{hidasi:et.al:2016:ICLR} propose a GRU4Rec model by applying a GRU layer to encoder sequential items. Li et al.~\cite{li:et.al:2017:CIKM} propose a NRAM model to boost a GRU layer with attention to learn session representation by discriminating GRU hidden states. Wang et al.~\cite{wang:et.al:2019:SIGIR} propose a CSRM model to exploit collaborative information from multiple sessions. Ren et al.~\cite{ren:et.al:2019:AAAI} propose a RepeatNet model that incorporates a repeat-explore mechanism into a GRU network to automatically learn the switch probabilities between repeat and explore modes.

\par
Besides using RNNs, some other neural networks have also be explored for the SBR task~\cite{liu:et.al:2018:KDD,luo:et.al:2020:IJCAI, kang:et.al:2018:ICDM, yuan:et.al:2019:WSDM, song:et.al:2019:IJCAI, zhou:et.al:2019:WWW, pan:et.al:2020:SIGIR:Rethinking,yuan:et.al:2021:AAAI,cho:et.al:2021:SIGIR}. For example, Liu et al.~\cite{liu:et.al:2018:KDD} propose the STAMP model that employs an attention network to emphasize the importance of the last click item when learning session representation. Kang et al.~\cite{kang:et.al:2018:ICDM} propose the SASRec algorithm by stacking self-attention layers to capture latent relations in between consecutive items in a session. Yuan et al.~\cite{yuan:et.al:2019:WSDM} propose a NextItNet model by using a dilated convolutional network to capture dependencies in between items. Song et al.~\cite{song:et.al:2019:IJCAI} propose the ISLF model that employs a recurrent variational auto-encoder (VAE) to take interest shift into account. Pan et al.~\cite{pan:et.al:2020:SIGIR:Rethinking} apply a modified self-attention network to estimate items' importance in a session. Luo et al.~\cite{luo:et.al:2020:IJCAI} integrate the collaborative self-attention network to learn the session representation and predict the intent of the current session by investigating neighborhood sessions. Yuan et al.~\cite{yuan:et.al:2021:AAAI} propose a dual sparse attention network to find the possible unrelated item in sessions.

\par
The aforementioned neural models mainly consider item transition or co-occurrence relations within a single session or several similar sessions, which limits their performance because of their lack of item embedding learning against the whole set of all available items and sessions.

\subsection{Graph-based Methods}
Recently, graph neural models have aslo been adopted in the SBR task~\cite{wu:et.al:2019:AAAI, xu:et.al:2019:IJCAI, xia:et.al:2021:AAAI, qiu:et.al:2019:CIKM, chen:et.al:2020:KDD, wang:et.al:2020:SIGIR, zheng:et.al:2020:ICDMW, yu:et.al:2020:SIGIR, pan:et.al:2020:CIKM,xia:et.al:2021:CIKM,zhang:et.al:2021:InfoSci,li:et.al:2022:TKDE}. Wu et al.~\cite{wu:et.al:2019:AAAI} propose a SR-GNN model to construct a session graph and employ the GGNN model~\cite{li:et.al:2015:arxiv} for item representation learning. Xu et al.~\cite{xu:et.al:2019:IJCAI} make a combination of GNN and multi-layer self-attention network. Qiu et al. propose a FGNN model to convert a session into a directed weighted graph and use a weighted attention graph neural layer for item embedding learning. Zheng et al.~\cite{zheng:et.al:2020:ICDMW} propose a DGTN model in which a session together with its similar sessions are used to first construct an item graph. Yu et al.~\cite{yu:et.al:2020:SIGIR} propose a TAGNN model with a target attention mechanism to learn a target-aware session representation. Wang et al.~\cite{wang:et.al:2020:SIGIR} propose a GCE-GNN model to learn two types of item embeddings from a session graph and a global item graph. Zhang et al.~\cite{zhang:et.al:2021:InfoSci} propose a random walk approach to mine a kind of latent categorical information of items from an item graph. Xia et al.~\cite{xia:et.al:2021:AAAI} learn the inter- and intra-session information from two types of hypergraphs. Li et al.~\cite{li:et.al:2022:TKDE} propose a disentangled GNN method to cast item and session embeddings with disentangled representations of multiple factors.

\section{The Proposed GSN-IAS model}\label{Sec:Method}
Fig.~\ref{Fig:Model} presents the architecture of our GSN-IAS model, which consists of the following parts: (1) item graph construction; (2) graph spring network for item embedding; (3) informative anchor selection for item encoding; (4) session representation learning; and (5) prediction and fusion.

\subsection{Item Graph Construction} \label{Sec:ItemGraph}
To deal with item sparsity in one session, we first construct an \textit{item graph} to capture potential items' inter-relations from all sessions. In particular, we construct an undirected weighted item graph $\mathcal{G}=\{\mathcal{V}, \mathcal{E}\}$ with $\mathcal{V}$ the set of nodes (viz. items) and $\mathcal{E}$ the set of edges as follows: An edge $e_{ij}=(v_i, v_j, w_{i,j})$ in $\mathcal{E}$ is established between an item $v_i$ and another item $v_j$, if the $v_j$ is within a window of size $k$ centered at $v_i$ in any session, and $w_{i,j}$ is its weight corresponding the number of appearances of the relations $(v_i, v_j)$ in all sessions. Fig.~\ref{Fig:Model} illustrates an item graph constructed from three sessions. Take item $v_3$ for example. For a window
size $k=2$, the neighbors of item $v_3$ include $\{v_1, v_2, v_4, v_5, v_6\}$.

\begin{figure}[t]
	\centering
	\includegraphics[width=0.5\textwidth]{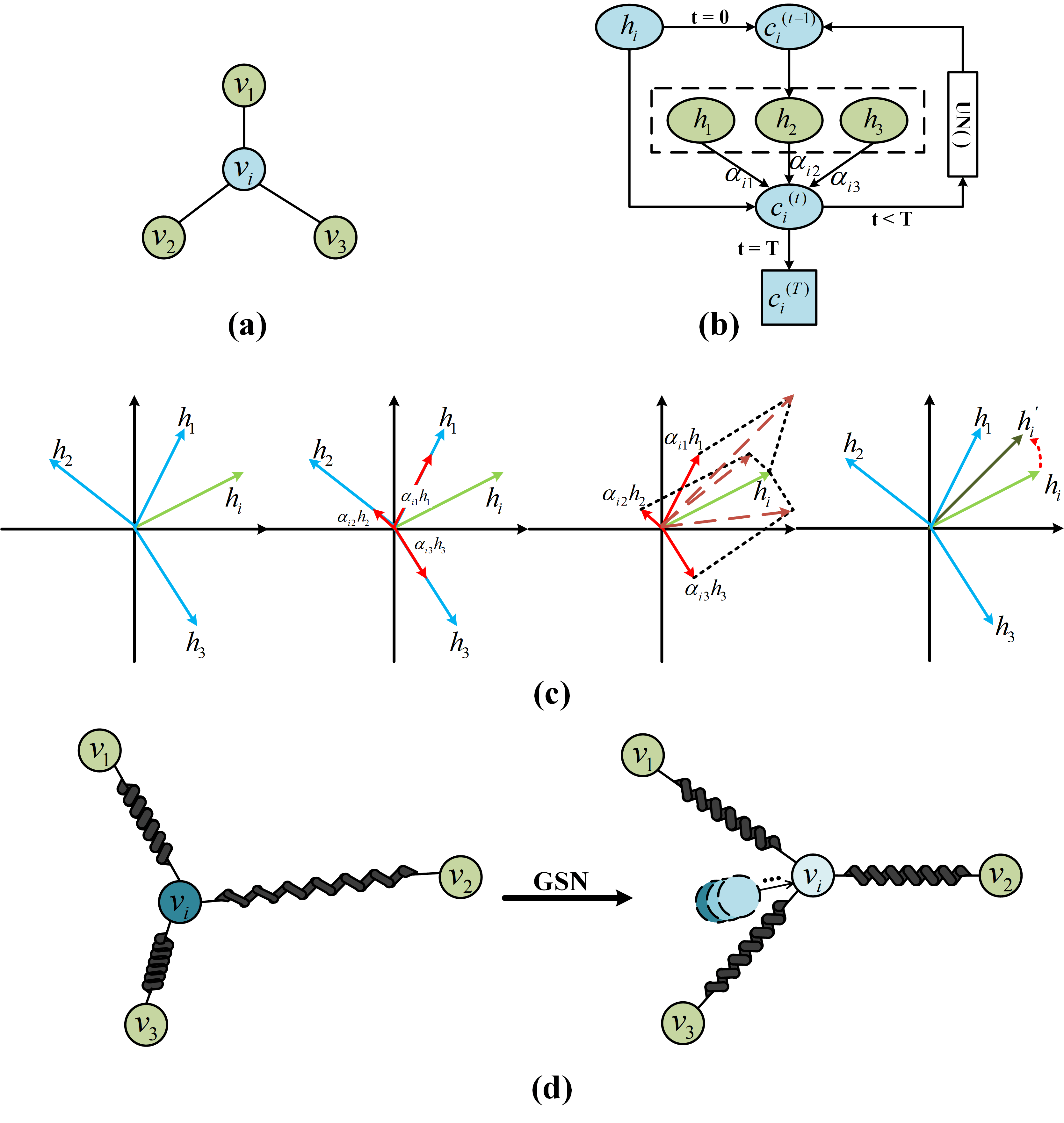}
	\caption{An example to illustrate the GSN core operations. (a) A node $v_i$ with three neighbors $v_1, v_2, v_3$. (b) The iteration process of GSN. (c) The lower part illustrates the input embeddings of the four nodes. After aggregation weights computation, it next illustrates the weighted neighbors' embeddings. The adjustments according to neighbors' embeddings are illustrated as dashed arrows; While the last one shows the updated embedding for node $v_i$ according to its own embedding and the weighted sum of its neighbors' embeddings. (d) Likens the GSN operation to a process in which multiple springs find a balance point via springs' push and pull.}
	\label{Fig:GSN}
\end{figure}

\begin{algorithm}[t]
	\caption{The Graph Spring Network (core operation for one node $v_i$)}
	\label{Alg:GSN}
	\KwIn{$\mathbf{h}_i \in \mathbb{R}^d$ (the embedding of node $v_i$), and $\{\mathbf{h}_j\in \mathbb{R}^d|v_j \in \mathcal{N}_i\}$ (the embeddings of $v_i$'s neighbors)}
	\KwOut{$\mathbf{h}_i^{\prime}\in \mathbb{R}^d$ (updated embedding of $v_i$)}
	$\mathbf{h} \leftarrow \mathtt{UN}(\mathbf{h}) = \frac{\mathbf{h}}{\|\mathbf{h}\|}$, for all $v_i, v_j$ \\
	$\mathbf{c}_i^{(0)}$ $\leftarrow$ $\mathbf{h}_i$ \\
	\For{$t=1, 2, ..., T$}{
		$w = 0$\\
		\For{$v_j \in \mathcal{N}_i$}{
			$\alpha_{ij} \leftarrow \mathbf{h}_j^{\mathsf{T}} \cdot \mathbf{c}_i^{(t-1)}$  // Compute similarity \\
			$w \leftarrow w +\exp(\alpha_{ij})$}
		\For{$v_j \in \mathcal{N}_i$}{
			$\alpha_{ij} \leftarrow \frac{\exp(\alpha_{ij})}{w}$ // Normalize to probability
		}
		$\mathbf{c}_i \leftarrow \mathtt{UN}(\mathbf{h}_i + \sum_{v_j \in \mathcal{N}_i} \alpha_{ij} \mathbf{h}_j)$  // Update
	}
	\KwResult{$\mathbf{h}_i^{\prime} \leftarrow \mathbf{c}_i$ }
\end{algorithm}

\subsection{Graph Spring Network for Item Embedding}

\subsubsection{Graph Spring Network}
We propose a \textit{graph spring network} (GSN) to learn nodes' embeddings from a graph, which aims to find a \textit{balance point} for each node in the embedding space given its neighbors' embeddings in an iterative way. We first introduce the operation of \textit{unit normalization} for regulating a node's embedding as a \textit{unit vector} as follows:
\begin{equation}
	\mathtt{UN}(\mathbf{h}) = \frac{\mathbf{h}}{\|\mathbf{h}\|},
\end{equation}
where $\mathbf{h}$ denotes an item embedding and $\|\mathbf{h}\|$ its vector norm, 2-norm here.

\par
For an item node $v_i$ in the item graph $\mathcal{G}$, let $\mathcal{N}_i$ denote the set of its one-hop neighbors. Given its neighbors' embeddings $\{\mathbf{h}_j|v_j \in \mathcal{N}_i\}$, the core operation of GSN in one iteration contains two main steps:
\begin{align}
	\alpha_{ij}^{(t+1)} & = \frac{\exp(\mathbf{h}_j^\mathsf{T}\cdot \mathbf{c}_i^{(t)})}{\sum_{v_j \in \mathcal{N}_i} \exp(\mathbf{h}_j^\mathsf{T}\cdot \mathbf{c}_i)}, \label{Eq:GSNWeight} \\
	\mathbf{c}_{i}^{(t+1)} & = \mathtt{UN}( \mathbf{h}_i + \sum_{v_j \in \mathcal{N}_i} \alpha_{ij}^{(t+1)} \mathbf{h}_j ) \label{Eq:GSNUpdate},
\end{align}
where $\mathbf{c}_i$ is an intermediate variable used for iteration and $\mathbf{c}_i^{(0)} = \mathbf{h}_i$. Note that the unit normalization obtains a unit vector without changing its direction. Algorithm~\ref{Alg:GSN} presents the pseudo-codes for the core operations of one-layer GSN. Notice that for the input embeddings of a node $v_i$ and its neighbors, one-layer GSN iterates $T$ times to update the node $v_i$'s embedding.

\par
Fig.~\ref{Fig:GSN} uses a toy example to illustrate the embedding update process for one node. The weight $\alpha_{ij}$ measures the normalized similarity between node $v_i$ and its neighbor $v_j$ in the embedding space. The weighted vector $\alpha_{ij}\mathbf{h}_j$ can be regarded as some \textit{virtual force} exerted by $v_j$ to $v_i$, liken to a spring between two nodes for pushing close or pull away two nodes; While all $\alpha_{ij}\mathbf{h}_j$s together with $\mathbf{h}_i$ shall decide the new direction of the updated embedding for $v_i$. An iterated process of such core operations would lead to that the embedding $\mathbf{h}_i$ of node $v_i$ gradually converges to some balance point in the embedding space, liken that the virtual forces by its neighbors reach a balance and no longer change its position.

\subsubsection{Convergence analysis}
The design philosophy of our GSN is to enable the embeddings of neighboring nodes can also reflect the local structural relations between a node and its neighbors in the item graph. We investigate whether these local structural relations have a balanced point, i.e. whether GSN converges after a sufficient number of iterations. We next prove the convergence of GSN by combining GSN and a von Mises-Fisher distribution~\cite{ma:et.al:2019:ICML}. That is, for a node $v_i$, the iterative updating leads to the convergence of its normalized similarities $\alpha_{ij}$ to its neighbors as well as its embedding $\mathbf{h}_i$.


\par
The probability density function of the von Mises-Fisher distribution (vMF)\footnote{https://en.wikipedia.org/wiki/Von\_Mises-Fisher\_distribution} is given by:
\begin{equation}
	f(\mathbf{x};\bm{\mu},\kappa) = C(\kappa)\exp(\kappa\bm{\mu}^\mathsf{T}\mathbf{x}) \propto \exp(\kappa\bm{\mu}^\mathsf{T}\mathbf{x}),
\end{equation}
where $\bm{\mu}$ and $\kappa \geq 0$ are called the mean direction and the concentration parameter, respectively, and $C(\kappa)$ is a constant. The larger value of $\kappa$, the higher concentration of the distribution around the mean direction $\bm{\mu}$.

\par
Let $\mathbf{c}_i$ denote the parameter of vMF distribution. We initialize nodes' embeddings $\mathbf{h}_i$ and $\{\mathbf{h}_j | v_j\in \mathcal{N}_i\}$ by:
\begin{align}
	&\mathbf{h}_i \sim {\rm vMF}(\mathbf{c}_i, 1), \\
	&\mathbf{h}_j \sim {\rm vMF}(\mathbf{c}_i, \alpha_{ij}),
\end{align}
where $\alpha_{ij}$ is the concentration of $\mathbf{h}_j$ around the $\mathbf{c}_i$. With such initializations, the GSN operation for computing $\mathbf{c}_i$ can be converted to estimate the parameter $\mathbf{c}_i$ based on the noisy observation $\mathbf{h}=\{\mathbf{h}_i, \mathbf{h}_j | v_i \vee v_j\in\mathcal{N}_i\}$. The following theorem summarizes the property and convergence of the GSN core operation:

\par
\begin{theorem}
	\label{Theo:ConGSN}
	The computation process of GSN core operation is equivalent to an expectation-maximization (EM) algorithm estimating parameter of the vMF distribution by maximizing the likelihood probability $P(\mathbf{h}_i|\mathbf{c}_i)$. The convergence of GSN core operation is equivalent to the convergence of the EM algorithm for estimating the parameter of vMF distribution.
\end{theorem}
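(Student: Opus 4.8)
The plan is to exhibit an explicit term-by-term correspondence between the two GSN update equations and the E-step and M-step of an EM procedure that fits the mean direction $\mathbf{c}_i$ of a vMF model to the observations, and then to inherit convergence from the standard EM convergence theorem. First I would fix the latent-variable model suggested by the initializations $\mathbf{h}_i\sim\mathrm{vMF}(\mathbf{c}_i,1)$ and $\mathbf{h}_j\sim\mathrm{vMF}(\mathbf{c}_i,\alpha_{ij})$: regarding $\mathbf{c}_i$ as the unknown parameter and each neighbor $\mathbf{h}_j$ as a mixture component, I introduce a latent assignment $z$ over $\mathcal N_i$ with a uniform prior. Using the vMF density $f(\mathbf{x};\bm\mu,\kappa)\propto\exp(\kappa\bm\mu^\mathsf{T}\mathbf{x})$, the observed-data objective to be maximized is $\log P(\mathbf{h}_i|\mathbf{c}_i)$, a marginal over $z$ whose complete-data log-likelihood contains the center term $\mathbf{c}_i^\mathsf{T}\mathbf{h}_i$ (concentration $1$) together with the neighbor terms indexed by $z$.

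Next I would derive the two steps. For the E-step, applying Bayes' rule to the model gives the posterior responsibility of component $j$ under the current estimate $\mathbf{c}_i^{(t)}$ as $q^{(t+1)}(z=j)\propto\exp(\mathbf{h}_j^\mathsf{T}\mathbf{c}_i^{(t)})$; normalizing over $\mathcal N_i$ reproduces exactly the softmax weight of Eq.~\eqref{Eq:GSNWeight}, so $\alpha_{ij}^{(t+1)}$ is identified with the E-step responsibility. For the M-step, the expected complete-data log-likelihood (the $Q$-function) reduces, up to constants, to $\mathbf{c}_i^\mathsf{T}(\mathbf{h}_i+\sum_{v_j\in\mathcal N_i}\alpha_{ij}^{(t+1)}\mathbf{h}_j)$; maximizing this over the unit sphere $\|\mathbf{c}_i\|=1$ by a Lagrange-multiplier argument yields $\mathbf{c}_i^{(t+1)}\propto\mathbf{h}_i+\sum_{v_j\in\mathcal N_i}\alpha_{ij}^{(t+1)}\mathbf{h}_j$, which after unit normalization is precisely Eq.~\eqref{Eq:GSNUpdate}. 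This is just the familiar closed-form MLE of a vMF mean direction as the normalized resultant vector.

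Having shown that one GSN iteration is exactly one EM iteration, I would then invoke the standard EM guarantee. Because the M-step maximizes a tight evidence lower bound that coincides with $\log P(\mathbf{h}_i|\mathbf{c}_i^{(t)})$ at the current iterate (Jensen's inequality), each iteration does not decrease the observed-data log-likelihood. Since the vMF log-likelihood is continuous on the compact unit sphere and therefore bounded above, the monotone sequence of likelihood values converges; consequently the iterates $\mathbf{c}_i^{(t)}$ and the induced weights $\alpha_{ij}^{(t)}$ converge to a stationary point, which establishes both the claimed equivalence and the convergence of the GSN core operation.

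The main obstacle is pinning down the generative model so that \emph{both} update equations fall out exactly, because the symbol $\alpha_{ij}$ plays a double role: it appears as the concentration parameter in the initialization $\mathbf{h}_j\sim\mathrm{vMF}(\mathbf{c}_i,\alpha_{ij})$ yet is simultaneously produced as the E-step responsibility. Reconciling these two readings of $\alpha_{ij}$, and verifying that the normalization constant $C(\kappa)$ does not depend on $\mathbf{c}_i$ and hence drops out of the $\mathbf{c}_i$-maximization, is where the argument needs the most care; the remaining E-step and M-step calculations are routine applications of Bayes' rule and constrained optimization.
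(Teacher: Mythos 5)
Your proposal is correct and follows the paper's overall skeleton---identify Eq.~\eqref{Eq:GSNWeight} with an E-step, Eq.~\eqref{Eq:GSNUpdate} with an M-step maximizing a linear $Q$-function over the unit sphere, then inherit convergence from monotonicity and boundedness of the likelihood---but it differs from the paper in the one place that matters: the choice of latent variable. The paper takes the concentrations $\alpha_{ij}$ themselves as the latent variables, writes the complete-data log-likelihood as $\mathbf{h}_i^\mathsf{T}\mathbf{c}_i+\sum_{v_j\in\mathcal{N}_i}\alpha_{ij}\mathbf{h}_j^\mathsf{T}\mathbf{c}_i$, and then \emph{asserts} that the conditional expectation $\hat{\alpha}_{ij}=\mathbb{E}(\alpha_{ij}|\mathbf{h},\mathbf{c}_i^{(t)})$ equals the softmax of dot products; no posterior over the continuous $\alpha_{ij}$ is ever derived, and the dependence of the vMF normalizer $C(\alpha_{ij})$ on the latent variable is silently dropped. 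You instead introduce a discrete assignment variable $z$ over $\mathcal{N}_i$ with a uniform prior, so the softmax weights arise from Bayes' rule as genuine responsibilities; this resolves exactly the ``double role of $\alpha_{ij}$'' that you flag as the main obstacle, and it is the reading under which the theorem's claim of equivalence to EM is derived rather than postulated. What the paper's version buys is that the quantity appearing in the $Q$-function is literally the $\alpha_{ij}$ of the GSN update, so the identification with Eq.~\eqref{Eq:GSNWeight} is immediate; what your version buys is a well-posed generative model in which both steps fall out of standard EM computations. Your M-step (normalized resultant vector) coincides with the paper's Lagrange-multiplier calculation and is arguably tidier, since the paper must choose the sign of the multiplier ($\lambda=-\tfrac{1}{2}$) by hand to pick the maximum rather than the minimum. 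One caveat you share with the paper: monotone, bounded likelihood values give convergence of $L(\mathbf{c}_i^{(t)})$, but convergence of the iterates $\mathbf{c}_i^{(t)}$ and weights to a stationary point requires additional regularity (Wu-type conditions); since the theorem only asserts equivalence with the convergence of EM, this over-statement is harmless here, but your final sentence should be phrased as ``the likelihood values converge, and the GSN iteration converges in exactly the sense that EM does.''
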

\begin{proof}
	\label{Proof:ConGSN}
	See Appendix~\ref{Appendix:GSNProof}.
\end{proof}

\par
Theorem~\ref{Theo:ConGSN} indicates that the GSN iteration process can be transformed as using the EM algorithm for parameter estimation. In Proof~\ref{Proof:ConGSN}, we prove that Eq.~\eqref{Eq:GSNWeight} is performing the E-step, and Eq.~\eqref{Eq:GSNUpdate} is performing the M-step. Our GSN in fact is iteratively performing the E-step and M-step to estimate the parameter $\mathbf{c}_i$ based on the given observations $\mathbf{h}=\{\mathbf{h}_i, \mathbf{h}_j | v_i \vee v_j\in\mathcal{N}_i\}$. According to the convergence of EM algorithm, the GSN therefore converges. The convergence of GSN core operation also reflects the difference between our GSN and some widely used GNNs, like LightGCN~\cite{he:et.al:2020:SIGIR} and GAT~\cite{velivckovic:et.al:2018:ICLR}. Given the neighbors' embeddings of a node $v_i$, our GSN is to iteratively find a locally optimal aggregation weights between $v_i$ and its neighbors; While most GNNs compute the aggregation weights only once by an attention function or by direct graph Laplacian matrix. More comparisons with other GNNs are provided in Appendix~\ref{Appendix:GNNComparison}.

\subsubsection{Item Embedding}
We can also stack multiple layers of GSN to include high-order neighbors for learning a node embedding. Let $\mathbf{h}_i^{(l)} (l=1,...,L)$ denote the $l$-th layer embedding of $v_i$ learned by GSN, and $\mathbf{H}^{(l)}$ denote the $l$-th layer embedding matrix of all nodes. For an item node $v_i$, to enjoy some potential relations with its higher orders' neighbors, we compute its final item embedding $\mathbf{h}_i^a$ by
\begin{equation}
	\mathbf{h}_i^a = \mathbf{h}_i^{(0)} + \mathbf{h}_i^{(1)} + ... + \mathbf{h}_i^{(L)}.
\end{equation}
We denote $\mathbf{H}^a \in \mathbb{R}^{N\times d}$ as the item embeddings matrix, where the $i$-th row is the embedding of item $v_i$.

\subsection{Informative Anchor Selection for Item Encoding}

\subsubsection{Informative Anchor Selection}
We would like to select a few of representative items to describe potential categorical features from thousands of items. To this end, we propose a measure, called \textbf{item entropy}, to evaluate the informativeness of an item. The item entropy $H(v_i)$ of $v_i$ is defined by
\begin{equation}
	H(v_i) = -\sum_{j=1}^{\lvert \mathcal{S} \rvert}P_{v_i, s_j} \cdot \log(P_{v_i, s_j})
\end{equation}
where $\lvert \mathcal{S} \rvert$ is the number of all available sessions. If an item $v_i$ is clicked in a session $s_j$, then $P_{v_i, s_j}$ is computed by
\begin{equation}
\begin{split}
	P_{v_i, s_j} = \frac{\text{number of clicks in session } s_j}{\text{total number of clicks}} \times \\
	\frac{\text{number of sessions including item } v_i}{\text{total number of sessions}};
\end{split}
\end{equation}
Otherwise, $P_{v_i, s_j} = 0$. According to the definition, an item appearing in more sessions has higher item entropy, and a session including more clicks has more influence on item entropy. A large item entropy indicates that an item appears in a lot of sessions and these sessions include more clicks. In this regard, we select $M$ items with the top-$M$ highest item entropy as informative items, called \textbf{anchors}. Let $\mathcal{A}$ denote the set of all selected anchors.

\subsubsection{Item Encoding}
We extract the corresponding rows from item embedding matrix $\mathbf{H}^a$ to construct an anchor embedding matrix $\mathbf{A} \in \mathbb{R}^{M \times d}$. Then we leverage a linear transformation to transfer the anchor embeddings from the item embedding space to another so-called \textit{anchor encoding space}:
\begin{equation}
	\mathbf{C} = \mathbf{W}_c \mathbf{A} + \mathbf{b}_c,
\end{equation}
where $\mathbf{C} \in \mathbb{R}^{M \times d}$ is the embedding matrix of anchors, $\mathbf{W}_c$ and $\mathbf{b}_c$ are trainable linear transformation parameters.

\par
We propose an unsupervised strategy to learn a soft assignment as the item-anchor distribution to describe their relations. Given an embedding $\mathbf{h}_i^a$ of item $v_i$, we use an encoder network to produce the logits of the probabilities, and then convert them to a distribution $\mathbf{p}_i \in \mathbb{R}^M$ by a softmax layer as follows:
\begin{align}
	\boldsymbol{\beta}_i & = f_p(\mathbf{h}_i^a), \\
	p_{i,j} & = \frac{\exp(\beta_{i,j})}{\sum_{j^\prime=1}^{M}\exp(\beta_{i,j^\prime})}, \\
	\mathbf{p}_i & = [p_{i,1}, p_{i,2}, ..., p_{i,M}],
\end{align}
where $f_p(\cdot)$ is the encoder network to map an item embedding $\mathbf{h}_i^a \in \mathbf{R}^d$ to a logits $\boldsymbol{\beta}_i \in \mathbb{R}^M$. Any $d\rightarrow M$ encoder network can be used as $f_p$. In this paper, we employ a two-layer feed forward neural network to complete this task:
\begin{equation}
	f_p(\mathbf{h}) = \mathbf{W}_p^{(2)^\top} {\rm LeakyReLU}(\mathbf{W}_p^{(1)^\top} \mathbf{h} + \mathbf{b}_p^{(1)}) + \mathbf{b}_p^{(2)},
\end{equation}
where $\mathbf{W}_p^{(1)} \in \mathbb{R}^{d\times d}, \mathbf{b}_p^{(1)} \in \mathbb{R}^d$ and $\mathbf{W}_p^{(2)} \in \mathbb{R}^{d\times M}, \mathbf{b}_p^{(2)} \in \mathbb{R}^M$ are trainable parameters. We use $\mathbf{P} \in \mathbb{R}^{N \times M}$ to denote the item-anchor distribution matrix.

\par
We assign the representative information of each anchor to items based on the item-anchor distribution to obtain item encoding as follows:
\begin{equation}
	\mathbf{H}^b = \mathbf{P} \mathbf{C},
\end{equation}
where $\mathbf{H}^b \in \mathbb{R}^{N\times d}$ is the item encoding matrix, the $i$-th row $\mathbf{h}_i^b = \mathbf{p}_i\mathbf{C}$ in $\mathbf{H}^b$ is the item encoding for $v_i$. Notice that this process is an end-to-end learning schema, where the model learns how to assign the distribution for anchors of an input item.

\subsection{Session Representation Learning}
For a session $S=\{v_1, v_2, ..., v_{\tau}\}$, in order to capture its sequential information, we employ a \textit{gated recurrent unit} (GRU) network to learn the session representation from two perspectives of the item embeddings $S^a=\{\mathbf{h}_1^a, \mathbf{h}_2^a, ..., \mathbf{h}_{\tau}^a\}$ and item encodings $S^b=\{\mathbf{h}_1^b, \mathbf{h}_2^b, ..., \mathbf{h}_{\tau}^b\}$:
\begin{align}
	(\mathbf{s}_1^a, \mathbf{s}_2^a, ..., \mathbf{s}_{\tau}^a) & = {\rm GRU}(\mathbf{h}_1^a, \mathbf{h}_2^a, ..., \mathbf{h}_{\tau}^a), \\
	(\mathbf{s}_1^b, \mathbf{s}_2^b, ..., \mathbf{s}_{\tau}^b) & = {\rm GRU}(\mathbf{h}_1^b, \mathbf{h}_2^b, ..., \mathbf{h}_{\tau}^b),
\end{align}
where $\mathbf{s}_{\tau}$ is the hidden state of GRU at the timestamp $\tau$. We select the last hidden state $\mathbf{s}_{\tau}^a$ and $\mathbf{s}_{\tau}^b$ as two session representations. We note that it is one GRU network shared for learning both session representations, which can help the GRU to be jointly optimized for both item embeddings and item encodings.

\subsection{Prediction and Fusion}
We propose to use \textit{decision fusion} to obtain the final score prediction. Note that $\mathbf{s}_{\tau}^a$ and $\mathbf{s}_{\tau}^b$ encode a session $S$ from two different feature spaces: The objective of $\mathbf{H}^a$ is to encode local structural characteristics; While that of $\mathbf{H}^b$ is to encode global topological information. A feature fusion might confuse the two design objectives; yet a decision fusion may be able to reconcile them.

\par
We first independently compute two intermediate predictions $\mathbf{\hat{y}}^a \in \mathbb{R}^d$ and $\mathbf{\hat{y}}^b \in \mathbb{R}^d$  based on $\mathbf{s}_{\tau}^a$ and $\mathbf{s}_{\tau}^b$ by
\begin{align}
	\mathbf{\hat{y}}^a &= {\rm softmax}(\mathbf{H}^a \mathbf{s}_\tau^a), \\
	\mathbf{\hat{y}}^b &= {\rm softmax}(\mathbf{H}^b \mathbf{s}_\tau^b).
\end{align}
We further employ two trainable weights to implement an \textit{adaptive decision fusion} as follows:
\begin{equation}
	\mathbf{\hat{y}} = \sigma(\omega^a) \cdot \mathbf{\hat{y}}^a + \sigma(\omega^b) \cdot \mathbf{\hat{y}}^b,
\end{equation}
where $\mathbf{\hat{y}} \in \mathbb{R}^N$ is the final score prediction for recommendation, and $\sigma(\omega^a), \sigma(\omega^b)$ are the adaptive fusion weights, $\sigma(\cdot)$ is the $\mathtt{sigmoid}$ function. The top-$K$ highest-scored items are selected to construct a recommendation list.

\par
We leverage the cross entropy loss function to supervise the prediction $\mathbf{\hat{y}}^a, \mathbf{\hat{y}}^b$ and $\mathbf{\hat{y}}$ as follows:
\begin{align}
	\mathcal{L}_a(\mathbf{\hat{y}}^a, \mathbf{y}) &= -\sum_{i=1}^{N} y_i \log(\hat{y}_i^a), \\
	\mathcal{L}_b(\mathbf{\hat{y}}^b, \mathbf{y}) &= -\sum_{i=1}^{N} y_i \log(\hat{y}_i^b), \\
	\mathcal{L}_c(\mathbf{\hat{y}}, \mathbf{y}) &= -\sum_{i=1}^{N} y_i \log(\hat{y}_i), \\
	\mathcal{L} &= \mathcal{L}_a + \mathcal{L}_b + \mathcal{L}_c,
\end{align}
where $\mathbf{y}$ is the ground truth, a one-hot indicative vector: $y_i=1$ if $v_i = v_{\tau+1}$; Otherwise $y_i=0$. Similarly, $\hat{y}_i^a, \hat{y}_i^b, \hat{y}_i$ are the $i$-th scores in $\mathbf{\hat{y}}^a, \mathbf{\hat{y}}^b, \mathbf{\hat{y}}$, respectively. At last, we optimize the loss function $\mathcal{L}$ to train our model.

\begin{table}[t]
	\centering
	\caption{Statistics of the three datasets.}
	\begin{tabular}{lccc}
		\toprule
		Datasets & Yoochoose 1/64 & RetailRocket & Diginetica \\
		\midrule
		number of nodes $N$ & 17,376 & 36,968 & 43,097 \\
		number of edges $|E|$ & 227,205 & 542,655 & 782,655 \\
		\bottomrule
	\end{tabular}
	\label{Tble:Graph}
\end{table}

\section{Experiment Settings}
\label{Sec:Experiment Settings}
\subsection{Datasets}
We conduct experiments on three real-world datasets: Yoochoose\footnote{http://2015.recsyschallenge.com/challenge.html}, RetailRocket\footnote{https://www.kaggle.com/retailrocket/ecommerce-dataset} and Diginetica\footnote{http://cikm2016.cs.iupui.edu/cikm-cup}, which are commonly used in the SBR task~\cite{li:et.al:2017:CIKM, liu:et.al:2018:KDD, wu:et.al:2019:AAAI, wang:et.al:2019:SIGIR, qiu:et.al:2019:CIKM, zhang:et.al:2020:Neurocomputing, yuan:et.al:2021:AAAI}. The Yoochoose is from the Recsys Challenge 2015,  containing six months of clicks from an European e-commerce website. The RetailRocket is a Kaggle contest dataset published by an e-commerce company, containing browsing activities in six months. The Diginetica comes from CIKM Cup 2016, and only its transactional data are used. 

\par
To make a fair comparison, our data preprocessing is the same as that of~\cite{li:et.al:2017:CIKM, liu:et.al:2018:KDD,wu:et.al:2019:AAAI,wang:et.al:2019:SIGIR,yuan:et.al:2021:AAAI}. In particular, we filter out sessions of length one and items appearing less than five times. Furthermore, after constructing the item graph, we adopt the data augmentation approach in~\cite{wu:et.al:2019:AAAI, qiu:et.al:2019:CIKM}. Table~\ref{Tble:Dataset} summarizes the statistics of the three datasets.
\begin{table}[t]
	\centering
	\caption{Statistics of the three datasets.}
	\begin{tabular}{lccc}
		\toprule
		Datasets & Yoochoose 1/64 & RetailRocket & Diginetica \\
		\midrule
		\# train sessions & 369,859 & 433,648 & 719,470 \\
		\# test sessions & 55,696 & 15,132 & 60,858 \\
		\# items & 17,376 & 36,968 & 43,097 \\
		\# average lengths & 6.16 & 5.43 & 5.12 \\
		\bottomrule
	\end{tabular}
	\label{Tble:Dataset}
\end{table}

\subsection{Parameter Setting}
Following previous work~\cite{li:et.al:2017:CIKM, liu:et.al:2018:KDD,wu:et.al:2019:AAAI,wang:et.al:2019:SIGIR}, we set the vector dimension $d=100$ and the mini-batch size to 100 for the three datasets. We adopt the Adam optimizer with the initial learning rate 0.01, with a decay rate of 0.1 after every 3 epoches. We set the L2 penalty to $10^{-5}$ to avoid overfitting. We set $k=3$ for the item graph construction. We set the GSN iteration $T=4$. The hyper-parameters of the number of anchors $M$ and the number of GSN layer $L$ will be discussed in our later section~\ref{Sec:Hyperparameter} of hyper-parameter analysis.

\begin{table*}[t]
	\centering
	\caption{Overall performance comparison.}
	\begin{threeparttable}
		\begin{tabular}{llcccccc}
			\toprule
			~ & \multirow{2}*{Methods} & \multicolumn{2}{c}{Yoochoose 1/64} & \multicolumn{2}{c}{RetailRocket} & \multicolumn{2}{c}{Digineitca} \\
			\cmidrule(r){3-4} \cmidrule(r){5-6} \cmidrule(r){7-8}
			~ & ~ & HR@20(\%) & MRR@20(\%) & HR@20(\%) & MRR@20(\%) & HR@20(\%) & MRR@20(\%) \\
			\cmidrule(r){1-8}
			\multirow{5}*{Traditional Methods} & POP & 11.15 & 2.99 & 1.61 & 0.38 & 0.74 & 0.22 \\
			~ & S-POP & 39.62 & 18.99 & 39.53 & 27.17 & 21.60 & 13.51 \\
			~ & FPMC & 45.62 & 15.01 & 32.37 & 13.82 & 26.53 & 6.95 \\
			~ & SKNN & 63.77 & 25.22 & \underline{54.28} & 24.46 & 48.06 & 16.95 \\
			~ & STAN & 69.45 & 28.74 & 53.48 & 26.81 & 49.93 & 17.59 \\
			\cmidrule(r){1-8}
			\multirow{4}*{Neural Methods} & NARM & 68.32 & 28.63 & 50.22 & 24.59 & 49.70 & 16.17 \\
			~ & STAMP & 68.74 & 29.67 & 50.96 & 25.17 & 45.64 & 14.32 \\
			~ & CSRM & 69.85 & 29.71 & - & - & 51.69 & 16.92 \\
			~ & CoSAN & 70.04 & 29.85 & 52.47 & 24.40 & 48.34 & 15.22 \\
			\cmidrule(r){1-8}
			\multirow{6}*{Graph-based Methods} & SR-GNN & 70.57 & 30.94 & 50.32 & 26.57 & 50.73 & 17.59 \\
			~ & GC-SAN & 70.66 & 30.04 & 51.18 & \underline{27.40} & 50.84 & 17.79 \\
			~ & TAGNN & 71.02 & 31.12 & - & - & 51.31 & 18.03 \\
			~ & FLCSP & 71.58 & 31.31 & 52.63 & 25.85 & 51.68 & 17.27 \\
			~ & Disen-GNN & 71.46 & \underline{31.36} & - & - & 53.79 & 18.99 \\
			~ & GCE-GNN & \underline{72.18} & 30.84 & - & - & \underline{54.22} & \underline{19.04} \\
			~ & $S^2$-DHCN & 68.34 & 27.89 & 53.66 & 27.30 & 53.18 & 18.44 \\
			\cmidrule(r){1-8}
			\multirow{2}*{Proposed Methods} & GSN-IAS & \textbf{72.34} & \textbf{31.45} & \textbf{57.13} & \textbf{29.97} & \textbf{55.65} & \textbf{19.24} \\
			~ & Improv.(\%) & 0.22 & 0.29 & 5.25 & 9.38 & 2.64 & 1.05 \\
			\bottomrule
		\end{tabular}
		\begin{tablenotes}
			\centering
			\item[1] We have run 3 times in each dataset, and report the mean performances metrics, and the variances of different runs are consistently smaller than 0.01\% in our model.
		\end{tablenotes}
	\end{threeparttable}
	\label{Tble:Overall Performance}
\end{table*}

\subsection{Competitors}
We compare GSN-IAS with the following competitors, which are divided into three groups:

\par
\textbf{Traditional methods:} these apply some simple recommendation rules, like item popularity, matrix factorization, Markov chain or k-nearest-neighbors.

\par\noindent
$\cdot$ \textsf{POP} and \textsf{S-POP} recommend the most popular items in the training dataset and in the current session respectively.
\par\noindent
$\cdot$ \textsf{FPMC}~\cite{rendle:et.al:2010:WWW} combines matrix factorization and Markov chain for the next item recommendation.
\par\noindent
$\cdot$ \textsf{SKNN}~\cite{jannach:et.al:2017:Recsys} uses the k-nearest-neighbors (KNN) approach for session-based recommendation.
\par\noindent
$\cdot$ \textsf{STAN}~\cite{garg:et.al:2019:SIGIR} extends the SKNN method to incorporate sequential and temporal information with a decay factor.

\par
\textbf{Neural Network Models:} these focus on designing neural network to learn session representation.
\par\noindent
$\cdot$ \textsf{NARM}~\cite{li:et.al:2017:CIKM} incorporates a GRU layer with attention mechanism to learn session sequential representation.
\par\noindent
$\cdot$ \textsf{STAMP}~\cite{liu:et.al:2018:KDD} uses an attention mechanism to capture general interests of a session and current interests of the last click.
\par\noindent
$\cdot$ \textsf{CSRM}~\cite{wang:et.al:2019:SIGIR} considers the collaborative neighborhors of the latest $m$ sessions for predicting the sessision intent.
\par\noindent
$\cdot$  \textsf{CoSAN}~\cite{luo:et.al:2020:IJCAI} learns the session representation and predicts the session intent by investigating neighborhood sessions.

\par
\textbf{Graph-based Models:} these incorporate complex transition relations, co-occurrence relations into graph to capture richer information.
\par\noindent
$\cdot$  \textsf{SR-GNN}~\cite{wu:et.al:2019:AAAI} is the first work proposing to apply the graph neural network to learn item embedding.
\par\noindent
$\cdot$  \textsf{GC-SAN}~\cite{xu:et.al:2019:IJCAI} combines the graph neural network and multi-layers self-attention network.
\par\noindent
$\cdot$  \textsf{TAGNN}~\cite{yu:et.al:2020:SIGIR} uses a target attention mechanism to learn session representation.
\par\noindent
$\cdot$  \textsf{FLCSP}~\cite{zhang:et.al:2021:InfoSci} makes a decision fusion of latent categorical prediction and sequential prediction.
\par\noindent
$\cdot$  \textsf{Disen-GNN}~\cite{li:et.al:2022:TKDE} proposes a disentangled graph neural network to capture the session purpose with the consideration of factor-level attention on each item.
\par\noindent
$\cdot$  \textsf{GCE-GNN}~\cite{wang:et.al:2020:SIGIR} constructs session graph and global graph to learn two levels of item embedding.
\par\noindent
$\cdot$  \textsf{$S^2$-DHCN}~\cite{xia:et.al:2021:AAAI} learns the inter- and intra-session information from two types of hypergraphs.

\section{Experiment Results}
\label{Sec:Experiment Results}
We adopt two metrics commonly used in the SBR task, i.e., HR@20 and MRR@20. \textbf{HR@20} (Hit Rate)  focuses on whether the desired item appearing in the recommended list, without considering the ranking of item in the recommended list. \textbf{MRR@20} (Mean Reciprocal Rank) is the average of reciprocal ranks of the correct recommended items in the recommendation list.


\subsection{Overall Comparison (RQ1)}
Table~\ref{Tble:Overall Performance} presents the overall performance comparison between our GSN-IAS and the competitors, where the best results in every column are boldfaced and the second results are underlined. It is observed that our GSN-IAS outperforms the others on all three datasets in terms of the highest HR@20 and MRR@20 on three datasets.

\par
In traditional methods, the \textsf{POP} and \textsf{S-POP} based on simple recommendation rules are not competitive, since they lack the ability to capture the item sequential dependency in sessions. The \textsf{FPMC} using Markov chain to model the sequential relations also performs poorly, since the strict independence assumption of Markov chain is inconsistent with the real situation of the SBR task. The \textsf{SKNN} and \textsf{STAN} achieve competitive performances compared with neural methods and graph-based methods, and the \textsf{SKNN} even reaches the second best of HR@20 on RetailRocket dataset. They both consider the influence of similar sessions, and the \textsf{STAN} designs two time decay functions to describe the recency of a past session and the chronological order of items in a session, respectively. These imply that the collaborative information of similar sessions and the sequential relations in a session are important for the SBR task.

\par
The neural methods, i.e. the \textsf{NARM}, \textsf{STAMP}, \textsf{CSRM} and \textsf{CoSAN}, generally achieve significant performance improvements over traditional methods, which reflects the powerful representation ability of neural networks. However, these methods focus on learning a session representation, but have neglected to exploit some potential relations in between items from different sessions. Notice that compared with the \textsf{NARM} and \textsf{STAMP}, the \textsf{CSRM} and \textsf{CoSAN} take consideration of the collaborative information of similar sessions to assist the representation learning of current session, which makes them get better performance.

\par
The graph-based methods, i.e. the \textsf{SR-GNN}, \textsf{GC-SAN}, \textsf{TAGNN}, \textsf{FLCSP}, \textsf{Disen-GNN}, \textsf{GCE-GNN} and \textsf{$S^2$-DHCN}, achieve the state-of-the-art performance, which indicates that modeling the potential relations in between items by a graph structure is helpful for item and senssion representation learning. The first five construct a directed graph to capture item transition relations; While the \textsf{GCE-GNN} and \textsf{$S^2$-DHCN} consider item co-occurrences via constructing an undirected graph and a hypergraph respectively. The \textsf{GCE-GNN} gets three second best performances including HR@20 on Yoochoose 1/64 and HR@20, MRR@20 on Diginetica, and \textsf{$S^2$-DHCN} also gets competitive performances on RetailRocket and Diginetica. These suggest the importance of item co-occurrence relations for item embedding learning.

\par
Our approach \textsf{GSN-IAS} outperforms all the state-of-the-art algorithms on the three datasets. Especially on the RetailRocket dataset, the improvements of HR@20 and MRR@20 are 5.25\% and 9.38\% compared with the second best. This suggests the effectiveness of using our GSN for item embedding learning and informative anchor selection for item encoding. Note that except the \textsf{GCE-GNN} on Diginetica, other competitors only achieve the second best performance of either HR@20 or MRR@20; While our \textsf{GSN-IAS} achieves the best on both metrics.


\begin{table}[t]
	\centering
	\caption{Ablation study results.}
	\resizebox{0.5\textwidth}{!}{
		\begin{tabular}{lcccccc}
			\toprule
			Dataset & \multicolumn{2}{c}{Yoochoose 1/64} & \multicolumn{2}{c}{RetailRocket} & \multicolumn{2}{c}{Digineitca} \\
			\cmidrule(r){1-7}
			Method & HR@20 & MRR@20 & HR@20 & MRR@20 & HR@20 & MRR@20 \\
			\cmidrule(r){1-7}
			GSN-Anchor & 70.77 & 29.61 & 53.18 & 27.16 & 53.64 & 18.23 \\
			GSN-Item & 72.16 & 31.33 & 56.87 & 29.70 & 55.50 & 19.20 \\
			GSN-IAS-AvgFuse & 72.27 & 31.35 & 56.67 & 29.66 & 55.42 & 19.13 \\
			GSN-IAS & \textbf{72.34} & \textbf{31.45} & \textbf{57.13} & \textbf{29.97} & \textbf{55.65} & \textbf{19.24} \\
			\cmidrule(r){1-7}
			$(\sigma(\omega^a),\sigma(\omega^b))$ & \multicolumn{2}{c}{(0.8834, 0.1125)} & \multicolumn{2}{c}{(0.8421, 0.1517)} & \multicolumn{2}{c}{(0.8062, 0.1850)} \\
			\bottomrule
		\end{tabular}
	}
	\label{Tble:AblationStudy}
\end{table}

\subsection{Ablation Study}
We conduct ablation experiments to examine the effectiveness of each component in our \textsf{GSN-IAS}. We develop the following three ablation algorithms:
\begin{itemize}
	\item \textsf{GSN-Anchor}: It only uses item encoding for session representation learning and prediction.
	\item \textsf{GSN-Item}: It only uses item embedding for session representation learning and prediction.
	\item \textsf{GSN-IAS-AvgFuse}: It uses both item embedding and item encoding for session representation learning, but uses an average pooling for decision fusion.
\end{itemize}

\par
Table~\ref{Tble:AblationStudy} presents the results of ablation study. We first observe that \textsf{GSN-IAS} achieves the best performance than its ablation algorithms. Even the worst-performing \textsf{GSN-Anchor} has achieved the same level of performance as the state-of-the-art \textsf{SR-GNN}, \textsf{GC-SAN}, \textsf{TAGNN}, \textsf{FLCSP} methods, and \textsf{GSN-Item} also outperforms all competitors. These reflect two points: One is that each component in \textsf{GSN-IAS} makes its contribution to more accurate prediction, where \textsf{GSN-Item} plays the protagonist and \textsf{GSN-Anchor} the auxiliary; The other is that our GSN is an effective graph neural network to learn item embedding for the SBR task.

\par
From Table~\ref{Tble:AblationStudy}, we also observe that \textsf{GSN-IAS} outperforms the \textsf{GSN-IAS-AvgFuse}, which indicates the effectiveness of our adaptive fusion for learning fusion weights. On this basis, the last line of Table~\ref{Tble:AblationStudy} specifically shows the fusion weights learned for different datasets, where the fusion weights are learned differently for different datasets. Furthermore, it is not unexpected that $\sigma(\omega^a)>\sigma(\omega^b)$ on all datasets. This is in consistent with our previous analysis that the item embedding learning is the dominant factor, while the informative anchors information complements the former for the SBR task.


\begin{table}[t]
	\centering
	\caption{Performance comparison for using GCN, GAT, LightGCN for item embedding learning.}
	\resizebox{0.5\textwidth}{!}{
		\begin{tabular}{lcccccc}
			\toprule
			Dataset & \multicolumn{2}{c}{Yoochoose 1/64} & \multicolumn{2}{c}{RetailRocket} & \multicolumn{2}{c}{Digineitca} \\
			\cmidrule(r){1-7}
			Method & HR@20 & MRR@20 & HR@20 & MRR@20 & HR@20 & MRR@20 \\
			\cmidrule(r){1-7}
			GCN-IAS & 71.86 & 30.42 & 56.73 & 27.47 & 50.10 & 14.88 \\
			GAT-IAS & 72.16 & 30.81 & 57.05 & 29.51 & 55.06 & 18.61 \\
			LightGCN-IAS & \underline{72.47} & \textbf{30.98} & \textbf{57.18} & \underline{29.55} & \underline{55.64} & \underline{19.20} \\
			GSN-IAS & \textbf{72.34} & \underline{31.45} & \underline{57.13} & \textbf{29.97} & \textbf{55.65} & \textbf{19.24} \\
			\bottomrule
	\end{tabular}
}
	\label{Tble:Comparison GNNs}
\end{table}

\begin{figure*}[t]
	\centering
	\includegraphics[width=\textwidth, height=0.7\textwidth]{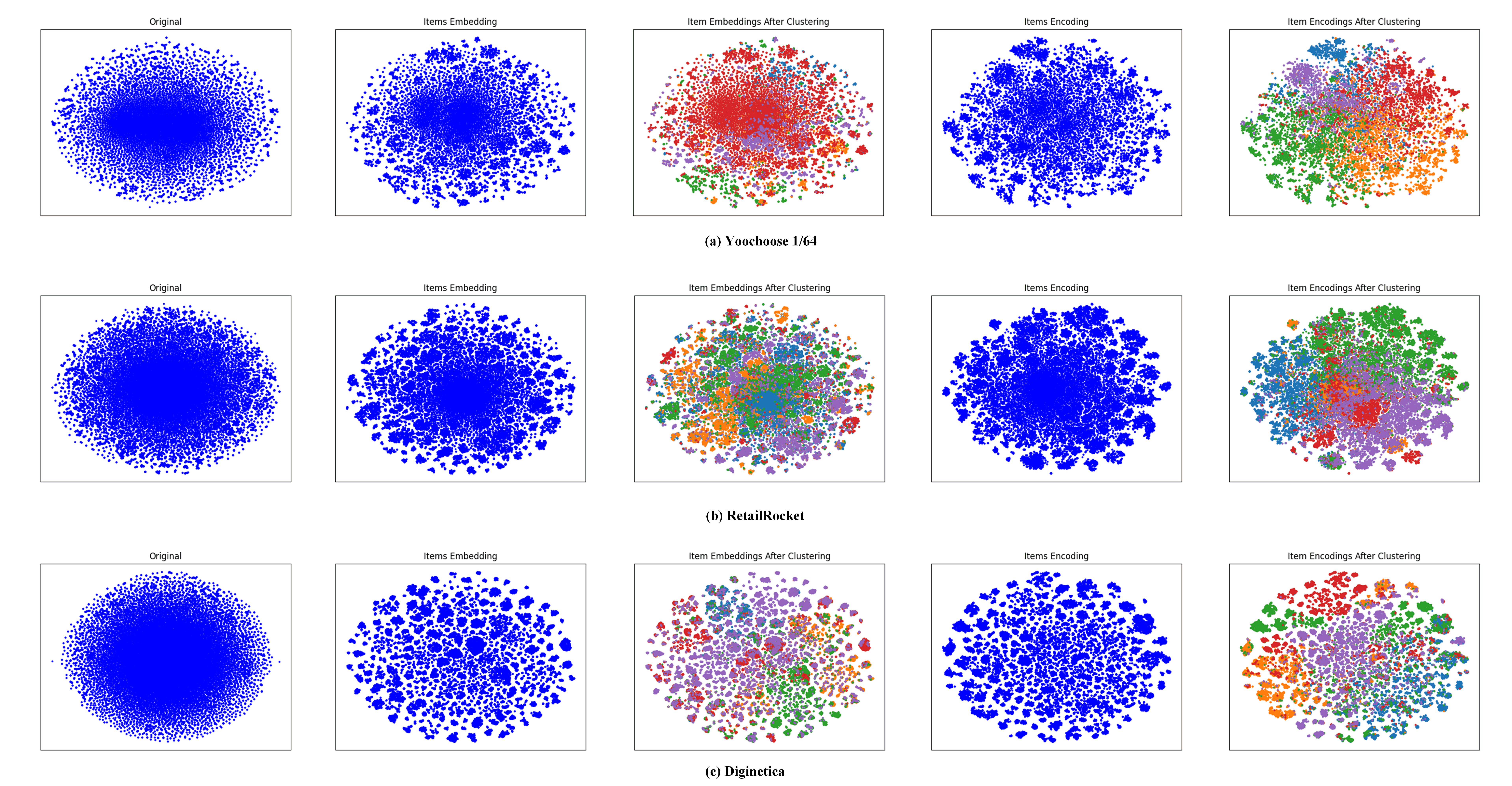}
	\caption{Visualization by t-sne for item embedding and anchor-based item embedding. We conduct a KMeans (K=5) clustering, and present the embeddings clustered in one class with the same color.}
	\label{Fig:Visualization}
\end{figure*}

\subsection{Comparison with other graph neural networks}
We conduct experiments using other well-knwon graph neural networks, including GCN~\cite{kipf:et.al:2017:ICLR}, GAT~\cite{velivckovic:et.al:2018:ICLR}, LightGCN~\cite{he:et.al:2020:SIGIR}, to replace our GSN for item embedding learning. We fix the hyper-parameter of the GNN layer $L=\{2,2,5\}$ and the number of anchors $M=\{100, 500, 1000\}$ for Yoochoose 1/64, RetailRocket, Dignetica, respectively. These comparison algorithms are denoted as \textsf{GCN-IAS}, \textsf{GAT-IAS} and \textsf{LightGCN-IAS}.

\par
Table~\ref{Tble:Comparison GNNs} presents the experiment results of using different GNNs. It is observed that \textsf{GSN-IAS} and \textsf{LightGCN-IAS} achieve comparable performance and are better than the other two, which validates our motivation of designing GSN without trainable parameters for ID-based item embedding learning in the SBR task. In addition, the \textsf{GAT-IAS} achieves better performance than that of \textsf{GCN-IAS}, which implies that learning aggregation weights based on node embedding is better than computing on graph adjacent matrix. Although the performance of our GSN is comparable to LightGCN, our GSN provides a new idea using a more simple, convenient and effective way to learn node embedding on a graph without using any additional parameters.

\begin{figure*}[t]
	\centering
	\includegraphics[width=\textwidth]{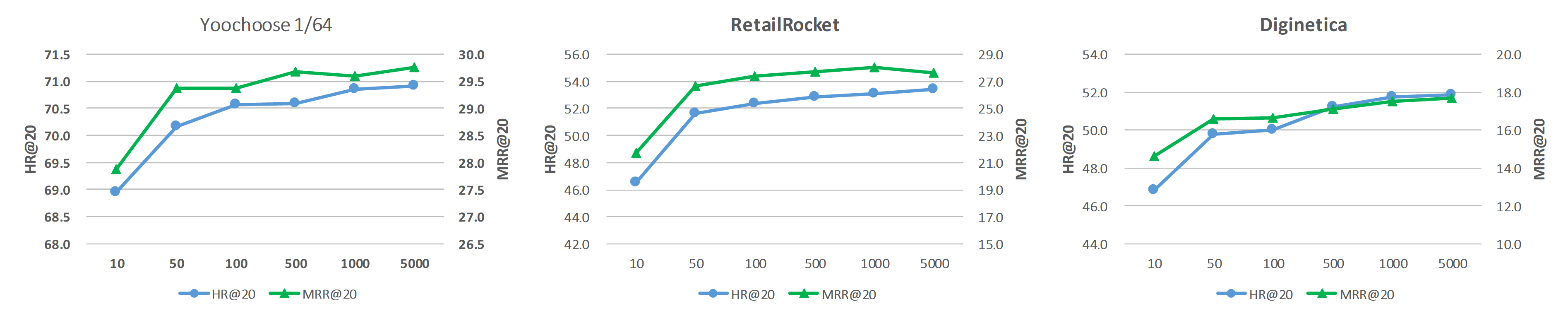}
	\caption{The performance of GSN-Anchor for using different numbers of selected anchors.}
	\label{Fig:NumAnchors}
\end{figure*}

\subsection{Visualization of item embedding}
Fig.~\ref{Fig:Visualization} presents the visualization of item embedding and item encoding in our \textsf{GSN-IAS} by the t-sne~\cite{van:et.al:2008:JMLR} algorithm. We observe that the results are consistent across different datasets. The original embeddings (the 1st column) are randomly initialized without much differentiations. Contrastively, after GSN learning for item embedding and after informative anchor selection for item encoding, the item embeddings (the 2nd column) and item encodings (the 4th column) reflect obvious clustering effect, that is, the similar items are closer to each other in the latent embedding space. To present the results more intuitively, we employ KMeans (K=5) algorithm to cluster the item embeddings (the 3rd column) and the item encodings (the 5th column). We can observe that the clustering effect is more obvious. Such clustering effect implies that both item embeddings and encodings can provide good item differentiations. Previous experiment results also reflect that such item differentiations can help to achieve more accurate recommendation.

\begin{table}[t]
	\centering
	\caption{Performance of using different numbers of GSN layers.}
	\resizebox{0.5\textwidth}{!}{
		\begin{tabular}{lcccccc}
			\toprule
			Dataset & \multicolumn{2}{c}{Yoochoose 1/64} & \multicolumn{2}{c}{RetailRocket} & \multicolumn{2}{c}{Digineitca} \\
			\cmidrule(r){1-7}
			Method & HR@20 & MRR@20 & HR@20 & MRR@20 & HR@20 & MRR@20 \\
			\cmidrule(r){1-7}
			GSN-IAS-1hop & 72.23 & 31.38 & 56.32 & 29.79 & 54.93 & 19.14 \\
			GSN-IAS-2hop & 72.34 & \textbf{31.45} & 57.13 & \textbf{29.97} & 55.21 & 19.22 \\
			GSN-IAS-3hop & 72.42 & 31.11 & 57.00 & 29.66 & 55.53 & 19.18 \\
			GSN-IAS-4hop & \textbf{72.52} & 31.07 & 57.34 & 29.63 & 55.57 & 19.13 \\
			GSN-IAS-5hop & 72.46 & 31.03 & \textbf{57.59} & 29.42 & \textbf{55.65} & \textbf{19.24} \\
			\bottomrule
		\end{tabular}
	}
	\label{Tble:NumLayers}
\end{table}

\subsection{Hyperparameter Analysis}
\label{Sec:Hyperparameter}
We finally examine the two hyperparameters in our \textsf{GSN-IAS}: One is the number of anchors, i.e. $M$; and the other is the number of stacked GSN layers, i.e $L$.

\par
Fig.~\ref{Fig:NumAnchors} presents the performance of GSN-Anchor for using different numbers of anchors. We observe that the trend of curve of HR@20 or MRR@20 on all the three datasets are consistent, that is, increasing first, then maintaining relatively stable, even decreasing slightly on RetailRocket with the increase of $M$. These results first indicate that too few anchors carry insufficient information; With increasing the number of anchors, the information provided by anchors tend to be saturated; While too many anchors may introduce some noise or repeated information to weaken the item differentiations.

\par
Table~\ref{Tble:NumLayers} presents the performance of using different layers of GSN. above contrast experiments. We first note that even the \textsf{GSN-IAS-1hop} still outperforms the state-of-the-art methods. With the increase of $L$, the HR@20 increases, but the growth rate slows down and even decreases on Yoochoose 1/64. The MRR@20 increases first and then decreases on Yoochoose 1/64 and RetailRocket. These results are easy to understand: Using a lager $L$, higher-order information of each node in the item graph can be explored, but more high-order neighbors may blur a node's information. The choice of $L$ is also related to the dataset size. The Yoochoose 1/64 is smaller than RetailRocket and Diginetica; So using a small $L$ suffices for Yoochoose 1/64. The RetailRocket and Diginetica contain about 2.1 times and 2.5 times more items than Yoochoose 1/64, needing a larger $L$ to explore more higher-order neighbors.

\section{Conclusion}\label{Sec:Conlusion}
In this paper, we have proposed a novel GSN-IAS model for the SBR task. Our model explores item co-occurrence to construct an item graph, on which we propose a new GSN neural network to optimize neighborhood affinity in ID-based item embedding learning. We have also deigned an informative anchor selection strategy to select anchors and to encode potential relations of all nodes to such anchors. Furthermore, we have employed a shared GRU to learn two session representations for two predictions. We have also proposed an adaptive decision fusion mechanism to fuse the two predictions to output the final recommendation list. Experiments on three public datasets have validated the superiority of our GSN-IAS model over the state-of-the-art algorithms.

\par
In this paper, we have especially discussed the characteristics of ID-based item embedding learning for the SBR task. We note that our GSN model may not only be suitable for such ID embedding learning in the SBR task, but may also be of further applications in other scenarios, like collaborative filtering. Our future work will explore the GSN potentials in other tasks. Furthermore, we suggest to further mine and encode some latent transition knowledge such as diverse transitional modes in between different kinds of items as another interesting future work.


%


%
%

\ifCLASSOPTIONcaptionsoff
  \newpage
\fi



%

\bibliographystyle{IEEEtran}
\bibliography{reference}

\appendix
\subsection{A Proof of Convergence of GSN}
\label{Appendix:GSNProof}
We investigate whether the core operation of GSN can converge after a sufficient number of iterations. We take an arbitrary node as example and conduct convergence analysis under the EM algorithm framework. We prove that the GSN iteration process is equivalent to an expectation-maximization (EM) algorithm estimating parameter of the vMF distribution by maximizing the likelihood probability $P(\mathbf{h}_i|\mathbf{c}_i)$.

\par
We first briefly review the EM algorithm to clarify our objective. The EM algorithm is an iterative method to find the maximum likelihood estimates of parameters for a statistical model depending on unobserved latent variables. The EM iteration alternates between performing an expectation step (E-step) to compute the expectation function of the log-likelihood, and a maximization step (M-step) to find such parameters maximizing the expected log-likelihood computed on the E-step. We can also derive the expectation function for our task.

\par
We assume that the parameter after $t$ iterations is $\mathbf{c}_i^{(t)}$, and EM algorithm optimizes the new parameter estimate $\mathbf{c}_i$ to increase the log-likelihood denoted as $L(\mathbf{c}_i)=\log (P(\mathbf{h}|\mathbf{c}_i))$, that is, $L(\mathbf{c}_i) > L(\mathbf{c}_i^{(t)})$. Let us consider the difference between them:
{\small
	\begin{eqnarray}
		\label{Eq:likelihood}
			& L(\mathbf{c}_i) - L(\mathbf{c}_i^{(t)}) = \log \left(\sum_{\bm{\alpha}}P(\mathbf{h},\bm{\alpha}|\mathbf{c}_i)\right) - \log P(\mathbf{h}|\mathbf{c}_i^{(t)}) \nonumber \\	
			& = \log \left(\sum_{\bm{\alpha}}P(\bm{\alpha}|\mathbf{h},\mathbf{c}_i^{(t)})\frac{P(\mathbf{h}|\bm{\alpha},\mathbf{c}_i)P(\bm{\alpha}|\mathbf{c}_i)}{P(\bm{\alpha}|\mathbf{h},\mathbf{c}_i^{(t)})}\right) - \log P(\mathbf{h}|\mathbf{c}_i^{(t)})  \nonumber  \\
			& \geq \sum_{\bm{\alpha}}P(\bm{\alpha}|\mathbf{h},\mathbf{c}_i^{(t)})\log\frac{P(\mathbf{h}|\bm{\alpha},\mathbf{c}_i)P(\bm{\alpha}|\mathbf{c}_i)}{P(\bm{\alpha}|\mathbf{h},\mathbf{c}_i^{(t)})} - \log P(\mathbf{h}|\mathbf{c}_i^{(t)})  \nonumber  \\
			& = \sum_{\bm{\alpha}}P(\bm{\alpha}|\mathbf{h},\mathbf{c}_i^{(t)})\log \frac{P(\mathbf{h}|\bm{\alpha},\mathbf{c}_i)P(\bm{\alpha}|\mathbf{c}_i)}{P(\bm{\alpha}|\mathbf{h},\mathbf{c}_i^{(t)})P(\mathbf{h}|\mathbf{c}_i^{(t)})}
\end{eqnarray}}
where $\bm{\alpha}$ are the latent variables. We let:
\begin{equation}
	\begin{split}
		& B(\mathbf{c}_i,\mathbf{c}_i^{(t)}) \triangleq L(\mathbf{c}_i^{(t)}) + \\
		& \sum_{\bm{\alpha}}P(\bm{\alpha}|\mathbf{h},\mathbf{c}_i^{(t)})\log \frac{P(\mathbf{h}|\bm{\alpha},\mathbf{c}_i)P(\bm{\alpha}|\mathbf{c}_i)}{P(\bm{\alpha}|\mathbf{h},\mathbf{c}_i^{(t)})P(\mathbf{h}|\mathbf{c}_i^{(t)})}
	\end{split}.
\end{equation}
So we have
\begin{equation}
	L(\mathbf{c}_i) \geq B(\mathbf{c}_i,\mathbf{c}_i^{(t)}).
\end{equation}
Obviously, $B(\mathbf{c}_i,\mathbf{c}_i^{(t)})$ is a lower boundary of $L(\mathbf{c}_i)$, where $L(\mathbf{c}_i^{(t)})=B(\mathbf{c}_i^{(t)},\mathbf{c}_i^{(t)})$. In order for $L(\mathbf{c}_i)$ to increase as much as possible, we choose $\mathbf{c}_i^{(t+1)}$ to maximize $B(\mathbf{c}_i,\mathbf{c}_i^{(t)})$ as follows:
{\small
	\begin{eqnarray}
			& & \mathbf{c}_i^{(t+1)} = \mathop{\arg\max}\limits_{\mathbf{c}_i} B(\mathbf{c}_i,\mathbf{c}_i^{(t)}) \\
			& = &\mathop{\arg\max}\limits_{\mathbf{c}_i} (L(\mathbf{c}_i^{(t)}) \nonumber \\
        & & \quad \quad  +  \sum_{\bm{\alpha}}P(\bm{\alpha}|\mathbf{h},\mathbf{c}_i^{(t)})\log \frac{P(\mathbf{h}|\bm{\alpha},\mathbf{c}_i)P(\bm{\alpha}|\mathbf{c}_i)}{P(\bm{\alpha}|\mathbf{h},\mathbf{c}_i^{(t)})P(\mathbf{h}|\mathbf{c}_i^{(t)})} ) \\
			& = & \mathop{\arg\max}\limits_{\mathbf{c}_i}(\sum_{\bm{\alpha}}P(\bm{\alpha}|\mathbf{h},\mathbf{c}_i^{(t)})\log(P(\mathbf{h}|\bm{\alpha},\mathbf{c}_i)P(\bm{\alpha}|\mathbf{c}_i))) \\
			& = & \mathop{\arg\max}\limits_{\mathbf{c}_i} (\sum_{\bm{\alpha}}P(\bm{\alpha}|\mathbf{h},\mathbf{c}_i^{(t)})\log P(\mathbf{h},\bm{\alpha}|\mathbf{c}_i) ) \\
			& = & \mathop{\arg\max}\limits_{\mathbf{c}_i} \mathbb{E}_{\bm{\alpha}}[\log P(\mathbf{h},\bm{\alpha}|\mathbf{c}_i)|\mathbf{h},\mathbf{c}_i^{(t)}]
\end{eqnarray}}
The $Q$-function is defined as follows.
\begin{definition}
	\textbf{$\bm{Q}$-function} is the expectation of the log-likelihood function $\log P(\mathbf{h},\bm{\alpha}|\mathbf{c}_i)$ with respect to the conditional probability $P(\bm{\alpha}|\mathbf{h},\mathbf{c}_i^{(t)})$ of the unobserved latent variable $\bm{\alpha}$ given observed data $\mathbf{h}$ and the current parameter $\mathbf{c}_i^{(t)}$, that is,
	\begin{equation}
		Q(\mathbf{c}_i,\mathbf{c}_i^{(t)}) = \mathbb{E}_{\bm{\alpha}}[\log P(\mathbf{h},\bm{\alpha}|\mathbf{c}_i)|\mathbf{h},\mathbf{c}_i^{(t)}].
	\end{equation}
\end{definition}
We can intuitively find that choosing parameter $\mathbf{c}_i$ to maximize $Q(\mathbf{c}_i,\mathbf{c}_i^{(t)})$ is equivalent to maximize $B(\mathbf{c}_i,\mathbf{c}_i^{(t)})$, which can increase the likelihood $L(\mathbf{c}_i)$. The $Q$-function is the core of EM algorithm, and the E-step and M-step are as follows:
\begin{itemize}
	\item E-step: Compute the $Q$-function for the given current parameter $\mathbf{c}_i^{(t)}$.
	\item M-step: Find the parameter $\mathbf{c}_i$ to maximize the $Q$-function.
\end{itemize}
We leverage the convergence of EM algorithm to discuss the convergence of our GSN core operation. According to the probability density function of vMF distribution, we have
\begin{align}
	& P(\mathbf{h}_i|\mathbf{c}_i) \propto \exp(\mathbf{h}_i^\mathsf{T}\mathbf{c}_i) \\
	& P(\mathbf{h}_j, \alpha_{ij}|\mathbf{h}_j) \propto \exp(\alpha_{ij}\mathbf{h}_j^\mathsf{T}\mathbf{c}_i)
\end{align}
where $\alpha_{ij}$ is the concentration parameter seen as a latent variable. We can write the log-likelihood in $Q$-function as follows:
\begin{eqnarray}
		& &\log P(\mathbf{h},\bm{\alpha}|\mathbf{c}_i) = \log (P(\mathbf{h}_i|\mathbf{c}_i) \prod_{v_j\in\mathcal{N}_i}P(\mathbf{h}_j,\alpha_{ij}|\mathbf{c}_i)) \nonumber \\
		&  \propto & \log (\exp (\mathbf{h}_i^{\mathsf{T}}\mathbf{c}_i)\prod_{v_j\in\mathcal{N}_i} \exp(\alpha_{ij}\mathbf{h}_j^\mathsf{T}\mathbf{c}_i) ) \\
		& = & \log (\exp(\mathbf{h}_i^\mathsf{T}\mathbf{c}_i + \sum_{v_j \in \mathcal{N}_i}\alpha_{ij}\mathbf{h}_j^\mathsf{T}\mathbf{c}_i)) \\
		& = & \mathbf{h}_i^\mathsf{T}\mathbf{c}_i + \sum_{v_j \in \mathcal{N}_i}\alpha_{ij}\mathbf{h}_j^\mathsf{T}\mathbf{c}_i
\end{eqnarray}
The $Q$-function can be written as:
\begin{eqnarray}
	\label{Eq:QFunction1}
		Q(\mathbf{c}_i,\mathbf{c}_i^{(t)}) &=& \mathbb{E}_{\bm{\alpha}}[\log P(\mathbf{h},\bm{\alpha}|\mathbf{c}_i)|\mathbf{h},\mathbf{c}_i^{(t)}] \\
		&=& \mathbb{E}_{\bm{\alpha}}[(\mathbf{h}_i^\mathsf{T}\mathbf{c}_i + \sum_{v_j \in \mathcal{N}_i}\alpha_{ij}\mathbf{h}_j^\mathsf{T}\mathbf{c}_i)|\mathbf{h},\mathbf{c}_i^{(t)}] \\
		&=& \mathbf{h}_i^\mathsf{T}\mathbf{c}_i + \sum_{v_j \in \mathcal{N}_i}\mathbb{E}(\alpha_{ij}|\mathbf{h},\mathbf{c}_i^{(t)})\mathbf{h}_j^\mathsf{T}\mathbf{c}_i
\end{eqnarray}

\par
We are going to compute $\mathbb{E}(\alpha_{ij}|\mathbf{h},\mathbf{c}_i^{(t)})$, denoted as $\hat{\alpha}_{ij}$, which is the expectation of latent variable $\alpha_{ij}$ given the current parameter $\mathbf{c}_i^{(t)}$ and observed data $\mathbf{h}$. Note that $\alpha_{ij}$ describes the concentration of $\mathbf{h}_j$ around the $\mathbf{c}_i$. We compute it based on the dot-product similarity and normalize it as a probability by the softmax function as follows:
\begin{eqnarray}
	\label{Eq:Ealpha}
		\hat{\alpha}_{ij} &=& \mathbb{E}(\alpha_{ij}|\mathbf{h},\mathbf{c}_i^{(t)}) \\
		&=& \frac{\exp (\mathbf{h}_j^{\mathsf{T}}\mathbf{c}_i^{(t)})}{\sum_{v_j \in \mathcal{N}_i}\exp (\mathbf{h}_j^{\mathsf{T}}\mathbf{c}_i^{(t)})}
\end{eqnarray}
Then the $Q$-function is rewritten as
\begin{equation}
	\label{Eq:QFunction2}
	Q(\mathbf{c}_i,\mathbf{c}_i^{(t)}) = \mathbf{h}_i^\mathsf{T}\mathbf{c}_i + \sum_{v_j \in \mathcal{N}_i}\hat{\alpha}_{ij}\mathbf{h}_j^\mathsf{T}\mathbf{c}_i.
\end{equation}
After the expectation $\mathbb{E}(\alpha_{ij}|\mathbf{h},\mathbf{c}_i^{(t)})$ has obtained, the $Q$-function is computed directly. This proves that the Eq.~\eqref{Eq:GSNWeight} is performing the E-step.

\par
After the E-step, we have obtained the expectation $\hat{\alpha}_{ij} = \mathbb{E}(\alpha_{ij}|\mathbf{h},\mathbf{c}_i^{(t)})$, and we would like to find parameter $\mathbf{c}_i$ to maximize $Q(\mathbf{c}_i,\mathbf{c}_i^{(t)})$, that is,
\begin{equation}
	\mathbf{c}_i^{(t+1)} = \mathop{\arg\max}\limits_{\mathbf{c}_i} Q(\mathbf{c}_i,\mathbf{c}_i^{(t)}).
\end{equation}
Note that we use an $\mathsf{UN}(\cdot)$ function to constraint $\mathbf{c}_i$, i.e. $\Vert \mathbf{c}_i \Vert_2 = 1$. This is actually a constrained extreme-value problem, which can be formulated as follows:
\begin{eqnarray}
		& \mathop{\max}\limits_{\mathbf{c}_i} Q(\mathbf{c}_i,\mathbf{c}_i^{(t)}) = \mathbf{h}_i^\mathsf{T}\mathbf{c}_i + \sum_{v_j \in \mathcal{N}_i}\hat{\alpha}_{ij}\mathbf{h}_j^\mathsf{T}\mathbf{c}_i \\
		& s.t. \quad \Vert \mathbf{c}_i \Vert_2^2 - 1 = 0
\end{eqnarray}
We use the Lagrange multipliers to solve this problem. The Lagrangian function can be written as follows:
\begin{equation}
	L(\mathbf{c}_i,\lambda,\eta) = \mathbf{h}_i^{\mathsf{T}}\mathbf{c}_i + \sum_{v_j \in \mathcal{N}_i}\hat{\alpha}_{ij}\mathbf{h}_j^{\mathsf{T}}\mathbf{c}_i+\lambda(\Vert \mathbf{c}_i \Vert_2^2 - 1),
\end{equation}
where $\lambda$ is the Lagrange multiplier. We take the derivative of the Lagrangian function with respect to $\mathbf{c}_i$ and $\lambda$, respectively, hence getting the extreme-value:
\begin{subnumcases}{\label{Eq:derivative}}
	\frac{\partial L}{\partial \mathbf{c}_i} = \mathbf{h}_i + \sum_{v_j \in \mathcal{N}_i}\hat{\alpha}_{ij}\mathbf{h}_j + 2\lambda\mathbf{c}_i = 0 & $ $ \label{Eq:derivative-ci} \\
	\frac{\partial L}{\partial \lambda} = \Vert \mathbf{c}_i \Vert_2^2 - 1 = 0 & $ $ \label{Eq:derivative-lambda}
\end{subnumcases}
A solution to these equations is as follows:
\begin{subnumcases}
		\mathbf{c}_i = -\frac{1}{2\lambda}(\mathbf{h}_i + \sum_{v_j \in \mathcal{N}_i}\hat{\alpha}_{ij}\mathbf{h}_j) & $ $\\
		\lambda  \neq 0 & $ $
\end{subnumcases}
The above solution shows that every $\mathbf{c}_i$ that satisfies the $\lambda \neq 0$ is an extreme point at which the $Q$-function takes an extreme value. We set $\lambda = -\frac{1}{2} \neq 0$, and $\mathbf{c}_i$ is computed by
\begin{equation}
	\mathbf{c}_i = \mathbf{h}_i + \sum_{v_j \in \mathcal{N}_i}\hat{\alpha}_{ij}\mathbf{h}_j
\end{equation}
This can directly prove that Eq.~\eqref{Eq:GSNUpdate} is performing the M-step.

\par
Let $\alpha_{ij}^{(t)}$ and $\mathbf{c}_i^{(t)}$ denote the result of the $t$-th E-step and M-step, respectively. According to Eq.~\eqref{Eq:likelihood}, we have
\begin{equation}
	L(\mathbf{c}_i^{(t-1)}) = \log (P(\mathbf{h}|\mathbf{c}_i^{(t-1)})) \leq \log (P(\mathbf{h}|\mathbf{c}_i^{(t)})) = L(\mathbf{c}_i^{(t)})
\end{equation}
The likelihood $L(\mathbf{c}_i)$ increases monotonically, and its upper-bound is zero. Therefore, the GSN convergence follows the convergence of the EM algorithm.

\subsection{Comparison with other graph neural networks}
\label{Appendix:GNNComparison}
We compare our GSN with two commonly used GNNs, i.e. LightGCN~\cite{he:et.al:2020:SIGIR} and GAT~\cite{velivckovic:et.al:2018:ICLR}.

\par
The core operation of LightGCN can be formulated by
\begin{align}
	\mathbf{H}^{(l+1)} &= \mathbf{D}^{-\frac{1}{2}} \mathbf{A} \mathbf{D}^{-\frac{1}{2}} \mathbf{H}^{(l)} \label{Eq:LightGCN-Matrix} \\
	\mathbf{h}_i^{(l+1)} &= \sum_{v_j \in \mathcal{N}_i} \frac{1}{\sqrt{| \mathcal{N}_i |}\sqrt{| \mathcal{N}_j |}}\mathbf{h}_j^{(l)} \label{Eq:LightGCN-Vector}
\end{align}
where Eq.~\eqref{Eq:LightGCN-Matrix} is the matrix form and Eq.~\eqref{Eq:LightGCN-Vector} is the vector form, and $\mathbf{L} = \mathbf{D}^{-\frac{1}{2}} \mathbf{A} \mathbf{D}^{-\frac{1}{2}}$ is the Laplacian matrix of the graph, $\mathcal{N}_j$ is the neighbor set of its neighbor node $v_j$. We can observe the commonalities and differences between GSN and LightGCN as follows:
\begin{itemize}
\item \textbf{Commonalities:} Both GSN and LightGCN do not involve a global transformation kernel, as well as no extra trainable parameters, which can ensure the training process only within the initial node embedding space.
\item \textbf{Differences:} The aggregation weights of LightGCN depends on numbers of one-order and two-order neighbors. The aggregation weights of GSN are computed as the embedding space similarities between a node and its neighbors.
\end{itemize}

\par
The core operation of GAT can be formulated by:
\begin{align}
	&\gamma_{ij} = \frac{\exp \left({ \Phi}(\mathbf{a}^{(l)\mathsf{T}}[\mathbf{W}^{(l)}\mathbf{h}_i^{(l)} || \mathbf{W}^{(l)}\mathbf{h}_j^{(l)}])\right)}{\sum_{v_j\in \mathcal{N}_i}\exp \left({\Phi}(\mathbf{a}^{(l)\mathsf{T}}[\mathbf{W}^{(l)}\mathbf{h}_i^{(l)} || \mathbf{W}^{(l)}\mathbf{h}_j^{(l)}])\right)} \label{Eq:GAT-Weight} \\
	&\mathbf{h}_i^{(l+1)} = \sigma (\sum_{v_j \in \mathcal{N}_i}\gamma_{ij}\mathbf{W}^{(l)}\mathbf{h}_j^{(l)})
\end{align}
where $\Phi$ is an activation function, say for example the LeakyReLU function, $\mathbf{W}^{(l)}, \mathbf{a}^{(l)}$ are the trainable parameters of the $l$-th GAT layer. We can observe the commonalities and differences between GSN and GAT as follows:
\begin{itemize}
	\item \textbf{Commonalities:} Both GSN and GAT compute the aggregation weights based on nodes embeddings.
	\item \textbf{Differences:} The GAT designs an attention function with some trainable parameters to compute the aggregation weights only once. Our GSN estimates the aggregation weights iteratively without additional trainable parameters. The iteration process aims at finding a local optimal aggregation weight for a node based on its neighbors' current embeddings.
\end{itemize}

\par
We next conduct complexity analysis. We compare the space complexity and time complexity of GSN with LightGCN. To facilitate subsequent analysis, we summarize the graph statistics used in this paper in Table~\ref{Tble:Graph}.

\textbf{Space Complexity Analysis:}
\par
For GSN, we only store the neighbors information of each node and the node embedding matrix $\mathbf{H} \in \mathbb{R}^{N \times d}$. We leverage the neighbor sampling for each node to sample only a subset of its neighbors based on the edge weights. The space complexity of one layer GSN is $O(Nd+Nr)$, where $r$ is the number of sampled neighbors for each node.

\par
For LightGCN, besides the node embedding matrix, LightGCN needs to store the Laplacian matrix $\mathbf{L}=\mathbf{D}^{-\frac{1}{2}} \mathbf{A} \mathbf{D}^{-\frac{1}{2}} \in \mathbb{R}^{N\times N}$ of the graph. We can use the sparse matrix storage method coordinate format to store the Laplacian matrix, which only stores the non-zero elements. The space complexity of LightGCN is $O(Nd + 2|E|)$, where $|E|$ is the number of edges in the graph.

\par
In this paper, we set $r=12$ for the three datasets, so $Nr=\{208,512;443,616;517,164\}$ are slightly less than $2|E|$, yet they are with the same magnitude order. As such, the space complexity of GSN is on par with the LightGCN.

\par
\textbf{Time Complexity Analysis: }
\par
For GSN, the core operations can be computed parallelly for all nodes. The time complexity of one layer GSN is $O(TNrd)$, where $T$ is the number of iterations, $N$ is the number of nodes in the graph, $d$ is the dimension of the node embedding. In general, the number of nodes in a graph is large, but $T, r$ are two small constants, that is, $N \gg r, N \gg T$. The time complexity can be reduced to $O(Nd)$.

\par
For LightGCN, it only computes the matrix multiplication between Laplacian matrix and feature matrix. Obviously, its time complexity is $O(|E|d)$. Therefore, the time complexity of GSN and LightGCN depends on the number of nodes and the number of edges in a graph, respectively.

\par
In this paper, from Table~\ref{Tble:Graph}, we can observe that the number of nodes is less than the edges, but they are still on the same magnitude order. As such, the time complexity of our GSN is on par with the LightGCN.

%

%





\end{document}